\documentclass{IEEEtran}
\usepackage{amsmath,amssymb,amsfonts}
\usepackage{amsthm}
\usepackage{algorithmic}
\usepackage{graphicx}
\usepackage{textcomp}

\usepackage{subfiles}

\theoremstyle{plain}
\newtheorem{theorem}{Theorem}

\newtheorem{lemma}{Lemma}
\theoremstyle{definition}
\newtheorem{assumption}{Assumption}
\newtheorem{definition}{Definition}

\theoremstyle{remark}
\newtheorem{remark}{Remark}

\renewcommand{\epsilon}{\varepsilon}

\newcommand{\lf}{\left}
\newcommand{\rg}{\right}

\newcommand{\ii}{\infty}

\newcommand{\df}{\,d}

\newcommand{\sbs}{\subseteq}

\newcommand{\tms}{\times}

\newcommand{\seq}[2]{\left\{ #1 \right\}_{#2 \in \mathbb{N}}}
\newcommand{\net}[3]{\left( #1 \right)_{#2 \in #3}}

\newcommand{\N}{\mathbb{N}}

\newcommand{\R}{\mathbb{R}}

\newcommand{\ps}{\psi}

\newcommand{\textall}{\quad\text{for all }}

\newcommand{\Rx}{\mathbb{R}^{n}}
\newcommand{\Ru}{\mathbb{R}^{p}}
\newcommand{\Rd}{\mathbb{R}^{q}}

\newcommand{\ddimension}{q}

\newcommand{\tdum}{s}
\newcommand{\tdumdum}{r}
\newcommand{\tfin}{0}

\newcommand{\tval}{t}
\newcommand{\xval}{x}
\newcommand{\yval}{y}
\newcommand{\uval}{a}
\newcommand{\dval}{b}

\newcommand{\uvals}{\mathcal{A}}
\newcommand{\dvals}{\mathcal{B}}

\newcommand{\xsig}{\mathbf{x}}

\newcommand{\usig}{\mathbf{a}}
\newcommand{\dsig}{\mathbf{b}}

\newcommand{\usigs}{\mathfrak{A}(t)}
\newcommand{\dsigs}{\mathfrak{B}(t)}

\newcommand{\dstrat}{\gamma}

\newcommand{\dstrats}{\Gamma(t)}

\newcommand{\trajSymbol}{\mathbf{x}}
\newcommand{\traj}{\trajSymbol_{x, t}^{\mathbf{a}, \mathbf{b}}}
\newcommand{\trajDStrat}{\trajSymbol_{x, t}^{\mathbf{a}, \gamma[\mathbf{a}]}}

\newcommand{\plausibleSet}{C([\tval,\tfin])}
\newcommand{\failureSet}{\mathbb{F}(\tval)}

\newcommand{\BRS}{\mathcal{L}(\tval)}
\newcommand{\BAS}{\mathcal{G}(\tval)}

\newcommand{\payoff}{J}

\newcommand{\valuefunction}{v}

\newcommand{\xy}[2]{
    \lf(#1, #2\rg)}
\newcommand{\zerosig}{\mathbf{0}}

\title{An Update to the Level Set Theorems in Hamilton-Jacobi Reachability Analysis
\author{Dylan Hirsch, William McEneaney, Jaime Fisac, Claire Tomlin, Sylvia Herbert}
\thanks{This material is based upon work supported by the National Science Foundation Graduate Research Fellowship Program under Grant No. DGE-2038238. Any opinions, findings, and conclusions or recommendations expressed in this material are those of the author(s) and do not necessarily reflect the views of the National Science Foundation.}
\thanks{Dylan Hirsch (corresponding author), William McEneaney, and Sylvia Herbert are with the Department of Mechanical and Aerospace Engineering, University of California, San Diego, 9500 Gilman Drive, La Jolla, CA 92093.
        {\tt\small dhirsch@ucsd.edu, wmceneaney@ucsd.edu, sherbert@ucsd.edu.}}%
\thanks{Jaime Fisac is with the Department of Mechanical and Aerospace Engineering, Princeton University, Princeton, NJ 08544. {\tt\small jfisac@princeton.edu.}}%
\thanks{Claire Tomlin is with the Department of Electrical Engineering and Computer Science, University of California, Berkeley, 387 Soda Hall, Berkeley, CA 94720}
}

\begin{document}
\maketitle

\begin{abstract}
Hamilton-Jacobi Reachability (HJR) is an important framework for controlling safety-critical systems despite uncertainty.
Its theoretical underpinnings are rooted in Hamilton-Jacobi Partial Differential Equations, which provide the value function used for controller synthesis.
The Level Set Theorems of HJR allow one to interpret the value function in terms of satisfaction of a qualitative goal (e.g. goal-reaching or obstacle-avoidance).
We here provide a technical update regarding additional criteria needed for these theorems to hold.
\end{abstract}

\section{Introduction}
Hamilton-Jacobi Reachability (HJR) is a powerful framework for controlling safety-critical nonlinear systems \cite{Bansal-Tomlin-HJR-Overview-2017}.
HJR casts certain control tasks, such as ensuring a robot reaches a target or avoids an obstacle, in the framework of differential games.
Theoretical results obtained for these games, together with sophisticated software tools, can then be used to generate sets with liveness and safety guarantees, paired with corresponding optimal controllers \cite{Mitchell-Tomlin-TAC-HJR-2005,Margellos-Lygeros-TAC-Reach-Avoid-2011,Fisac-Sastry-Reach-Avoid-2015}. The same technique can also be used to generate safety filters such as control barrier functions \cite{Choi_2021}.
Recent works have further advanced the utility of HJR using learning techniques \cite{Bansal_2021,Hsu_2021,Akametalu_2024}.

Traditional HJR analyses typically involve three steps.
First, the analyst converts the ``game of kind'' at hand (e.g. do not hit the obstacle) into a ``game of degree'' (e.g. maintain maximal distance from the obstacle).
Next, she solves a Hamilton-Jacobi (HJ) equation for the value of the game associated with this continuous goal.
Finally, she obtains from this value function the set of states from which the user can certify that the system will remain safe.
The theorems used for the final step are known as level set theorems, and several of them exist for various specifications of the game.

The second step is typically the most challenging and demands the bulk of the theoretical attention in the community, as the HJ equations become increasingly complex for more layered tasks (e.g. reach a goal without hitting an obstacle).
However, the third step involves a number of mathematical technicalities that have created some confusion in the community, possibly due to some technical errors in the level set theorems in the literature.

While these technical issues should not be of immediate concern to the practitioner, their theoretical resolution is interesting and nuanced, involving topological arguments.
In this work, we hope to clear this confusion by 1) correcting the technical errors in other level set theorems of HJR, 2) providing a clear and concise general level set theorem that can be applied to all finite horizon games of interest in HJR, and 3) explaining via counterexamples the origin of the confusing technical issues of the previous level set theorems.

Several others have explored topological approaches to the level set theorems.
Indeed, for disturbance-affine systems, we refer the reader to the proofs in \cite{Cardaliaguet-Two-Player-One-Target-1996,Gammoudi_2023}.
Our own approach builds upon the theory established in \cite{Varaiya-Existence-SIAM-1967}, which does not assume disturbance-affine dynamics.

With regard to the counterexamples, the two common issues seen in level set theorems in the literature involve the lack of a certain convexity condition on the dynamics and the specification of a target set for the controller player as closed rather than open.
We detail how both of these factors can result in discrepancies when obtaining backward reach and avoid tubes from the value function.

These peculiarities are related to the semi-permeable barriers studied by Isaacs \cite{Isaacs-Differential-Games}, typically for infinite-horizon pursuit-evasion games such as the homicidal chauffeur, and the leadership domains of viability theory \cite{Aubin-Viability-Theory}.
In this work, however, we specifically focus on fixed finite-horizon games with general dynamics, general winning criteria, open-loop control, and non-anticipative disturbance strategies, as is the standard setting in HJR.
\section{Setup}

We consider a dynamical system of the form
\begin{equation}
    \dot{\xsig}(\tdum) = f(\xsig(\tdum), \usig(\tdum), \dsig(\tdum)) \label{eqn:dynamics},
\end{equation}
where $f: \Rx \tms \uvals \tms \dvals \to \Rx$, with $\uvals \sbs \Ru$ and $\dvals \sbs \Rd$.
Here, $\usig$ represents the action of a controller player and $\dsig$ represents the action of an adversarial disturbance player.

We consider a game played on a finite time interval with initial time $\tval < 0$ and final time $\tfin$.
Let $\plausibleSet := \{ \xsig:[\tval,\tfin] \to \Rx \mid \xsig \text{ is continuous} \}$
be the set of plausible trajectories of the system, with corresponding norm $\|\xsig\|_\ii := \max_{\tdum \in [\tval,\tfin]} | \xsig(\tdum) |$ (here $|\cdot|$ represents the standard Euclidean norm).
The disturbance player wishes to ensure the actual trajectory is within some set of failure modes $\failureSet \sbs \plausibleSet$, and the controller player wishes to ensure the opposite.

\begin{remark}[Notation]
    We will use the symbols $\xsig$, $\usig$, and $\dsig$ to represent maps from the time interval $[t,0]$ to $\Rx$, $\uvals$, and $\dvals$, respectively.
    We will use the symbols $\xval$, $\uval$, and $\dval$ to represent specific elements of $\Rx$, $\uvals$, and $\dvals$, respectively.
    Additionally, while the initial time $\tval$ will be considered fixed throughout this work, for notational consistency with other HJR literature, we will still explicitly include $\tval$ in the subscripts or arguments of quantities whose definitions involve $\tval$. 
\end{remark}

Throughout this document, we assume the following:
\begin{assumption}\label{assumption:continuity}
The function $f$ is continuous, and the sets $\uvals$ and $\dvals$ are compact.
\end{assumption}

\begin{assumption}\label{assumption:affine-bounded}
There exists a $K > 0$ such that $|f(\xval,\uval,\dval)| \le K (|\xval| + 1)$ for all $\xval \in \Rx$, $\uval \in \uvals$, and $\dval \in \dvals$.
\end{assumption}

\begin{assumption}\label{assumption:locally-lipschitz}
For each $r > 0$ there is an $L_r > 0$ such that $|f(\xval,\uval,\dval) - f(\xval',\uval,\dval)| \le L_r |\xval - \yval|$ for all $\uval \in \uvals$, $\dval \in \dvals$, and $\xval,\xval' \in \Rx$ such that $|\xval| < r$ and $|\yval| < r$.
\end{assumption}

\begin{assumption}\label{assumption:convexity}
    For each $\xval \in \Rx$ and $\uval \in \uvals$, the set $f(\xval,\uval,\dvals) := \lf\{ f(\xval,\uval,\dval) \mid \dval \in \dvals \rg\}$ is convex.
\end{assumption}

\begin{assumption}\label{assumption:closure}
    The set $\failureSet$ is closed in $\plausibleSet$.
    More explicitly, if $\seq{\xsig_n}{n}$ is a sequence in $\failureSet$ and $\|\xsig - \xsig_n\|_\ii \overset{n \to \ii}{\longrightarrow} 0$ for some $\xsig \in \plausibleSet$, then $\xsig \in \failureSet$.
\end{assumption}

\begin{remark}
    Assumptions \ref{assumption:continuity}-\ref{assumption:locally-lipschitz} are standard and ensure existence and uniqueness of trajectories of \eqref{eqn:dynamics}.
    However, Assumptions \ref{assumption:convexity} and \ref{assumption:closure} are often overlooked, despite generally being needed to characterize the winning sets of each player using the level sets of a corresponding value function.
    Indeed, Assumption \ref{assumption:convexity} is what is missing in the canonical work \cite{Mitchell-Tomlin-TAC-HJR-2005}.
    On the other hand, Assumption \ref{assumption:closure} not being satisfied causes the issue in  \cite{Fisac-Sastry-Reach-Avoid-2015}.
\end{remark}

\begin{remark}
Note that Assumption \ref{assumption:convexity} automatically holds if the dynamics are disturbance-affine and $\dvals$ is convex. 
\end{remark}

\begin{remark}\label{remark:meaning-of-A} To satisfy Assumption \ref{assumption:closure}, one should choose targets to be open sets in $\Rx$ and obstacles to be closed sets in $\Rx$.
    Different choices of the failure modes $\failureSet$ of the system correspond to the various finite-horizon games of interest in HJR.
    For example, consider the game discussed in \cite{Mitchell-Tomlin-TAC-HJR-2005}, in which the controller's goal is to avoid the obstacle $\mathcal{G}_0 \sbs \Rx$ at all times.
    The corresponding $\failureSet$ is the set of all continuous maps $\xsig:[\tval,\tfin] \to \Rx$ such that $\xsig(\tdum) \in \mathcal{G}_0$ for some $\tdum \in [\tval,\tfin]$.
    Analogous choices of $\failureSet$ allow one to apply the upcoming result to games in which the controller's goal is to reach a target by some time, reach a target at some time, or to accomplish either of these goals while avoiding an obstacle.
\end{remark}

\subsection{Signals, strategies, and trajectories}
We let $\usigs$ be the set of Lebesgue measurable maps from $[\tval,\tfin]$ to $\uvals$, and we let $\dsigs$ be the set of Lebesgue measurable maps from $[\tval,\tfin]$ to $\dvals$.
We refer to the elements of $\usigs$ as \textit{control signals} on $[\tval,\tfin]$ and the elements of $\dsigs$ as \textit{disturbance signals} on $[\tval,\tfin]$.
Additionally, we refer to maps from $\usigs$ to $\dsigs$ as \textit{disturbance strategies} on $[\tval,\tfin]$.

\begin{definition}[Non-anticipative disturbance strategy]\label{definition:non-anticipative-strategy}
A disturbance strategy $\dstrat: \usigs \to \dsigs$ on $[\tval, \tfin]$ is said to be \textit{non-anticipative} if for each $\usig_1, \usig_2 \in \usigs$ and each $\tdum \in [\tval,\tfin]$, we have $\dstrat[\usig_1](\tdumdum) = \dstrat[\usig_2](\tdumdum)$ for a.e. $\tdumdum \in [\tval,\tdum]$ whenever $\usig_1(\tdumdum) = \usig_2(\tdumdum)$ for a.e. $\tdumdum \in [\tval,\tdum]$.
\end{definition}
Conceptually, the non-anticipative disturbance strategies represent the ways the disturbance player can react to the controller player, namely only based upon the controller player's decisions at and prior to the present time.
We denote the set of all non-anticipative disturbance strategies on $[\tval,\tfin]$ by $\dstrats$.

\begin{definition}[Carath\'{e}odory solution]\label{definition:caratheodory-solution}
Given $\usig \in \usigs$ and $\dsig \in \dsigs$, we say a function $\xsig:[\tval,\tfin] \to \Rx$ is a \textit{Carath\'{e}odory solution} of \eqref{eqn:dynamics} on $[\tval,\tfin]$ under $\usig$ and $\dsig$ if $\xsig$ is absolutely continuous and \eqref{eqn:dynamics} is satisfied for a.e. $\tdum \in [\tval,\tfin]$.
\end{definition}

Under Assumptions \ref{assumption:continuity}-\ref{assumption:locally-lipschitz}, it is a standard result (see e.g. Theorem 1.2.1 in \cite{Friedman-Differential-Games}) that given $\usig \in \usigs$, $\dsig \in \dsigs$, and $\xval \in \Rx$, there is a unique Carath\'{e}odory solution $\xsig$ of \eqref{eqn:dynamics} on $[\tval,\tfin]$ under $\usig$ and $\dsig$ such that $\xsig(\tval) = \xval$.

We denote by $\traj$ this unique solution.
In other words, $\traj$ is the trajectory of \eqref{eqn:dynamics} corresponding to the initial state $\xval$, initial time $\tval$, control signal $\usig$, and disturbance signal $\dsig$.

\subsection{Description of the game}
Given an initial state $\xval \in \Rx$, first the disturbance player selects a non-anticipative disturbance strategy $\dstrat$ from $\dstrats$.
Thereafter, the controller player selects a control signal $\usig$ from $\usigs$. 
Letting $\dsig := \dstrat[\usig]$ denote the resulting disturbance signal, the disturbance player wins if the state trajectory $\traj$ is in $\failureSet$, and the controller player wins otherwise.

We will refer to the set of initial states from which the controller and disturbance players can be guaranteed to win the game as the ``safe set'' and ``unsafe set,'' denoted $\BRS$ and $\BAS$, respectively.
More precisely,
\begin{align*}
    \BRS &:= \lf\{\xval \in \Rx \mid \forall \dstrat \in \dstrats,~\exists \usig \in \usigs,~\trajDStrat \notin \failureSet \rg\},\\
    \BAS &:= \lf\{\xval \in \Rx \mid \exists \dstrat \in \dstrats,~\forall \usig \in \usigs,~\trajDStrat \in \failureSet \rg\}.
\end{align*}
Note that $\BRS$ and $\BAS$ are indeed complementary.

\section{Level Set Theorem}

The result in this section is a generalized level set theorem that can easily be applied to many sorts of differential games.
After stating the theorem, we demonstrate its application to obstacle avoidance and goal reaching games.

\begin{theorem}\label{theorem:main-theorem}
    Let $\payoff: \plausibleSet \to \R$ be a bounded and continuous functional whose non-strict zero sub-level set is the set of failure modes $\failureSet$, i.e.
    \begin{equation*}
        \failureSet = \{ \xsig \in \plausibleSet \mid \payoff[\xsig] \le 0\}.
    \end{equation*}
    Let the value function $\valuefunction(\cdot, \tval): \Rx \to \R$ be given by
    \begin{equation*}
        \valuefunction(\xval, \tval) = \inf_{\dstrat \in \dstrats} \sup_{\usig\in \usigs} \payoff \lf[ \trajDStrat \rg].
    \end{equation*}
    Then the safe and unsafe sets, $\BRS$ and $\BAS$, are respectively given by
\begin{align}
    \BRS &= \{ \xval \in \Rx \mid \valuefunction(\tval, \xval) > 0\},\label{eqn:main-theorem-brs}\\
    \BAS &= \{ \xval \in \Rx \mid \valuefunction(\tval, \xval) \le 0\}\label{eqn:main-theorem-bas}.
    \end{align}
\end{theorem}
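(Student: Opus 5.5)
Since the paper notes that $\BRS$ and $\BAS$ are complementary, and the right-hand sides of \eqref{eqn:main-theorem-brs} and \eqref{eqn:main-theorem-bas} are also complementary, it suffices to prove two inclusions: $\{\valuefunction > 0\} \sbs \BRS$ and $\{\valuefunction \le 0\} \sbs \BAS$.

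\textbf{The easy inclusion.} Suppose $\valuefunction(\xval,\tval) > 0$ and fix any $\dstrat \in \dstrats$. Then
\begin{equation*}
\sup_{\usig \in \usigs} \payoff\lf[\trajDStrat\rg] \ge \valuefunction(\xval,\tval) > 0,
\end{equation*}
so some $\usig$ gives $\payoff[\trajDStrat] > 0$. This means $\trajDStrat \notin \failureSet$, and hence $\xval \in \BRS$. No convexity or closedness is needed for this direction.

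\textbf{The hard inclusion.} Suppose $\valuefunction(\xval,\tval) \le 0$. By definition of the infimum there are strategies $\dstrat_n \in \dstrats$ with $\sup_{\usig} \payoff[\trajSymbol_{\xval,\tval}^{\usig,\dstrat_n[\usig]}] \le 1/n$. I want a single $\dstrat^\ast \in \dstrats$ with $\payoff[\trajSymbol_{\xval,\tval}^{\usig,\dstrat^\ast[\usig]}] \le 0$ for every $\usig$; this is exactly $\xval \in \BAS$. The steps, in order, are:

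\emph{Step 1 (compactness for fixed $\usig$).} For each fixed $\usig$, consider the set $S(\usig)$ of trajectories $\trajSymbol_{\xval,\tval}^{\usig,\dsig}$ with $\dsig \in \dsigs$. Assumptions \ref{assumption:continuity}--\ref{assumption:locally-lipschitz} (via Gronwall) give uniform bounds and equicontinuity. Assumption \ref{assumption:convexity}, together with Filippov's lemma and a Mazur/weak-convergence argument on derivatives, shows that $S(\usig)$ is closed. Hence $S(\usig)$ is compact in $\plausibleSet$.

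\emph{Step 2 (limit strategy via nets).} View each $\dstrat_n$ as the element $\usig \mapsto \trajSymbol_{\xval,\tval}^{\usig,\dstrat_n[\usig]}$ of the product $\prod_{\usig \in \usigs} S(\usig)$. By Tychonoff this product is compact, so a subnet converges pointwise to some map $\usig \mapsto \xsig^\ast(\usig)$. Because there are uncountably many $\usig$, a diagonal subsequence argument will not work, and nets are the right tool here; this follows the approach of \cite{Varaiya-Existence-SIAM-1967}.

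\emph{Step 3 (the limit is non-anticipative).} Translate non-anticipativity into a statement about trajectories: if $\usig_1 = \usig_2$ a.e. on $[\tval,\tdum]$, then the corresponding trajectories coincide on $[\tval,\tdum]$. This condition involves only two coordinates at a time and is closed under pointwise uniform limits, so $\xsig^\ast$ inherits it.

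\emph{Step 4 (closedness gives the winning condition).} Since $\payoff$ is continuous, $\payoff[\xsig^\ast(\usig)] = \lim \payoff[\xsig_{n_\alpha}(\usig)] \le \lim 1/n_\alpha = 0$ along the subnet.

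\emph{Step 5 (recovering a disturbance strategy).} Each $\xsig^\ast(\usig) \in S(\usig)$, so it is generated by some disturbance signal. I must choose these signals $\dstrat^\ast[\usig]$ so that $\dstrat^\ast$ is non-anticipative. I would obtain $\dstrat^\ast[\usig](\tdum)$ as a measurable selection of
\begin{equation*}
\{\dval \in \dvals : f(\xsig^\ast(\usig)(\tdum), \usig(\tdum), \dval) = \dot\xsig^\ast(\usig)(\tdum)\}.
\end{equation*}
To keep the selection non-anticipative, I would use a fixed selection rule that depends only on the data at time $\tdum$, for instance the lexicographic-minimum measurable selector.

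\textbf{Main obstacle.} I expect Steps 2, 3 and 5 to be the crux, that is, passing to the limit in strategy space. The key point to verify is that a pointwise selection rule depending only on $(\xsig^\ast(\usig)(\tdum), \usig(\tdum), \dot\xsig^\ast(\usig)(\tdum))$ yields a non-anticipative strategy. This holds up to null sets, since derivatives of trajectories that agree on $[\tval,\tdum]$ agree a.e. there.
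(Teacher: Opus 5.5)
Your proposal is correct and follows the paper's proof essentially step for step: compactness of each trajectory set $S(\mathbf{a})$ under the convexity assumption, Tychonoff on the product of these sets, closedness of the trajectory-level non-anticipativity condition under pointwise limits, and the lexicographic-minimum Filippov selector to turn the limiting trajectory map back into a non-anticipative strategy. The only difference is cosmetic. You take a convergent subnet of a minimizing sequence $\gamma_n$ and apply continuity of $J$ directly, while the paper shows that the lower semicontinuous functional $\theta \mapsto \sup_{\mathbf{a}} J[\theta[\mathbf{a}]]$ attains its minimum on the compact set of non-anticipative plans.
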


Note that by our choice of norm on $\plausibleSet$, the functional $\payoff: \plausibleSet \to \R$ is continuous if and only if $|\payoff[\xsig] - \payoff[\xsig_n]| \overset{n \to \ii}{\longrightarrow} 0$ whenever $\|\xsig - \xsig_n\|_\ii \overset{n \to \ii}{\longrightarrow} 0$.

\begin{remark}
    In the above theorem, we of course could have assumed instead that $\failureSet$ is the non-strict zero \textit{super}-level set of $\payoff$ and switched the $\inf$ and $\sup$ in the definition of $\valuefunction$.
    In this case the signs in \eqref{eqn:main-theorem-brs} and \eqref{eqn:main-theorem-bas} would switch directions.
    However, due to the fact the disturbance player gets the instantaneous advantage in HJR, there should always be a strict inequality in \eqref{eqn:main-theorem-brs} and a non-strict inequality in \eqref{eqn:main-theorem-bas}.
\end{remark}

\subsection{Interpretation in obstacle avoidance}
Let us now interpret this theorem in the context of the obstacle avoidance game described in Remark \ref{remark:meaning-of-A}. 
If the set $\mathcal{G}_0$ is closed in $\Rx$, then we can choose some bounded, continuous function $g:\Rx \to \R$ whose zero sub-level set is $\mathcal{G}_0$.
Setting $\payoff[\xsig] := \min_{\tdum \in [\tval,\tfin]} g(\xsig(\tdum))$ and letting $\failureSet$ be as in Remark \ref{remark:meaning-of-A}, we see that $\failureSet$ is the zero sub-level set of $\payoff$.
We can then conclude from Theorem \ref{theorem:main-theorem} that for a given $\xval$,
$$\inf_{\dstrat \in \dstrats} \sup_{\usig\in \usigs} \min_{\tdum \in [\tval,\tfin]} g \lf( \trajDStrat(\tdum) \rg) \le 0$$
if and only if there is a non-anticipative disturbance strategy $\dstrat$ that ensures, regardless of $\usig$, $\trajDStrat(\tdum) \in \mathcal{G}_0$ for some $\tdum$.
This is the standard level set theorem for finite-time obstacle avoidance games.

\subsection{Interpretation in goal reaching}
Suppose instead we wish to apply this result in the context of goal reaching.
In particular, let $\mathcal{L}_0 \sbs \Rx$ be an open target set which the controller wishes to achieve prior to final time.
In this case, the set of failure modes $\failureSet$ consists of those continuous maps $\xsig:[\tval, 0] \to \Rx$ for which $\xsig(\tdum) \notin \mathcal{L}_0$ for all $\tdum \in [\tval,\tfin]$.
Note that $\failureSet$ is indeed closed since $\mathcal{L}_0$ is open.

We can choose $\ell:\Rx \to \R$ continuous and bounded such that $\mathcal{L}_0 = \{\xval \in \Rx \mid \ell(\xval) > 0\}$.
Letting $\payoff[\xsig] = \max_{\tdum \in [\tval,0]} \ell(\xsig(\tdum))$, we observe that $\xsig \in \failureSet$ if and only if $\payoff[\xsig] \le 0$.
Theorem \ref{theorem:main-theorem} then implies that for a given $\xval$,
$$\inf_{\dstrat \in \dstrats} \sup_{\usig\in \usigs} \min_{\tdum \in [\tval,\tfin]} \ell \lf( \trajDStrat(\tdum) \rg) \le 0$$
if and only if there is a non-anticipative disturbance strategy $\dstrat$ such that regardless of $\usig$, there is no $\tdum$ for which $\trajDStrat(\tdum) \in \mathcal{L}_0$.
This is the correct level set theorem for finite-time goal reaching games.

\section{Issues with prior level set theorems}

In this section, we demonstrate why Assumptions \ref{assumption:convexity} and \ref{assumption:closure} are needed.
The first counterexample involves an obstacle avoidance problem, and the second counterexample involves a target reaching problem.
We note that in Section 2.2 of \cite{Cardaliaguet-Two-Player-One-Target-1996}, the author discusses a different counterexample showing issues that can arise in an obstacle avoidance problem where the obstacle is an open set (i.e. Assumption \ref{assumption:closure} is violated).

\subsection{Lack of convexity}
The first mistake commonly seen in level set theorems in the literature is a lack of some analogue of Assumption \ref{assumption:convexity}. 
To see why this convexity condition is important, consider again the obstacle avoidance game discussed in Remark \ref{remark:meaning-of-A} and the previous section, where the dynamics are
\begin{align*}
    \dot{\xsig}_1 = \usig + \dsig,\quad\dot{\xsig}_2 = 1,
\end{align*}
with $\uvals = [-1,+1]$, $\dvals = \{-1,+1\}$, and $\mathcal{G}_0 = \lf\{\xy{0}{0}\rg\}$.
Note that Assumption \ref{assumption:convexity} is not satisfied for this system since $\dvals$ is not convex (it consists of only $-1$ or $+1$).

The Hamiltonian corresponding to this game is
$$H(\xval,p) := \max_{\uval \in \uvals} \min_{\dval \in \dvals} p_1 (\uval + \dval) + p_2 = p_2.$$
Letting $g(\xval) = 1 - \exp(-|\xval|^2)$,
one can check that 
\begin{equation*}
    v(x,t) = \begin{cases}
        g(\xval_1,\xval_2 - \tval) & \xval_2 \le t,\\
        g(\xval_1,0) & t \le \xval_2 \le 0,\\
        g(\xval_1,\xval_2) & 0 \le \xval_2,
    \end{cases}
\end{equation*}
is continuously differentiable and satisfies the HJ equation 
\begin{equation*}
    \begin{cases}
        \partial_\tval \valuefunction(\xval,\tval) + \min\lf\{0, H(x, \nabla \valuefunction(\xval,\tval))\rg\} = 0 & \text{in } \Rx \tms \R_{<0},\\
        \valuefunction(\xval,0) = g(\xval) & \text{on } \Rx.
    \end{cases}
\end{equation*}
Because $v\lf(\xy{0}{\tval},\tval\rg) = 0$, the standard level set theorem for obstacle avoidance (Theorem 2 in \cite{Mitchell-Tomlin-TAC-HJR-2005}) then asserts $\xy{0}{\tval} \in \mathcal{G}(\tval)$ for all $\tval < \tfin$.
 
To see that this conclusion is in fact false, fix some initial time $\tval < 0$, set $\xval = \xy{0}{\tval}$, and choose an arbitrary $\dstrat \in \dstrats$.
Letting $\zerosig:[\tval,\tfin] \to \R, \tdum \mapsto 0$ be the zero signal, we can choose some time $\tdum_0 \in (\tval,\tfin]$ at which $\int_\tval^{\tdum_0} \dstrat[\zerosig](\tdumdum) \df \tdumdum \ne 0$ (since otherwise $\dstrat[\zerosig](\tdum) = 0 \notin \dvals$ for a.e. $\tdum \in [\tval,\tfin]$ by the Lebesgue Differentiation Theorem; see Theorem 3.21 in \cite{Folland-Real-Analysis}).

Thus there are two cases: either $\int_\tval^{\tdum_0} \dstrat[\zerosig](\tdumdum) \df \tdumdum > 0$ or $\int_\tval^{\tdum_0} \dstrat[\zerosig](\tdumdum) \df \tdumdum < 0$.
Suppose the first case holds.
Consider the control signal $\usig \in \usigs$ defined by
\begin{equation*}
    \usig(\tdum) = \begin{cases}
         0 & \tdum \in [\tval,\tdum_0) \\
         1 & \tdum \in [\tdum_0,\tfin].
    \end{cases}
\end{equation*}
For each $\tdum \in [\tval,\tfin]$, we write $\trajSymbol_{\xval,\tval}^{\usig,\dstrat[\usig]}(\tdum)$ as $\xy{\xsig_1(\tdum)} {\xsig_2(\tdum)}$.

To reach a contradiction, suppose there is some $\tdum^* \in [\tval, \tfin]$ for which $\xy{\xsig_1(\tdum^*)}{ \xsig_2(\tdum^*)} \in \mathcal{G}_0$.
Since $\xsig_2(\tdum^*) = \tval + \int_\tval^{\tdum^*} 1 \df \tdumdum = \tdum^*$, it follows that $\tdum^* = \tfin$.
But then
\begin{align*}
    \xsig_1(\tdum^*) =& \int_\tval^{\tfin} \usig(\tdumdum) \df \tdumdum + \int_\tval^{\tfin} \dstrat[\usig](\tdumdum) \df \tdumdum\\
    =& \int_{\tdum_0}^{\tfin} 1 \df \tdumdum + \int_\tval^{\tdum_0} \dstrat[\usig](\tdumdum) \df \tdumdum + \int_{\tdum_0}^{\tfin} \dstrat[\usig](\tdumdum) \df \tdumdum\\
    \ge& \int_\tval^{\tdum_0} \dstrat[\usig](\tdumdum) \df \tdumdum = \int_\tval^{\tdum_0} \dstrat[\zerosig](\tdumdum) \df \tdumdum,
\end{align*}
where the final equality follows from non-anticipativity of $\dstrat$.
But then $\trajSymbol_1(\tdum^*) > 0$ by the case assumption, so that $\xy{\xsig_1(\tdum^*)}{\xsig_2(\tdum^*)} \notin \mathcal{G}_0$, providing the desired contradiction.

The other case follows analogously by instead letting $\usig(\tdum) = -1$ for $\tdum > \tdum_0$.
Since $\dstrat$ was chosen arbitrarily, we have shown that for every $\dstrat \in \dstrats$, there is an $\usig \in \usigs$ such that $\trajDStrat(\tdum) \notin \mathcal{G}_0$ for all $\tdum \in [\tval,\tfin]$.
In words, the controller player can ensure that the state will not ever touch $\xy{0}{0}$.

Thus, $\xy{0}{\tval} \notin \BAS$ for all $\tval < \tfin$ even though $v\lf(\xy{0}{\tval}, \tval \rg) = 0$.
By contrast, it is easy to see that for each $\tval < \tfin$, $v\lf(\xy{0}{\tfin}, \tval\rg) = 0$ and $\xy{0}{\tfin} \in \BAS$.
This counterexample demonstrates that for obstacle-avoidance games, interpreting the outcome of the game for points on the zero level set is not straightforward when Assumption \ref{assumption:convexity} does not hold.
For some points on the zero level set, the controller player will win, whereas for others the disturbance player will win.

\subsection{Closed target sets}
A different problem occasionally arises in the literature on the target reaching games introduced in the previous section.
We have previously asserted that targets should be open sets and obstacles should be closed sets. 
Intuitively, this stems from the fact that the disturbance has the instantaneous advantage in the game, creating an asymmetry between the players.

Sometimes, however, versions of the level set theorem for target reaching games assume that the target $\mathcal{L}_0$ is closed and claim that the winning set for the controller is the \textit{non-strict} zero super-level set of the corresponding value function.
We provide a counterexample to this.
Consider a system with the same dynamics as in the last example, but with $\uvals = \dvals = [-1,+1]$.
Let $\mathcal{L}_0 = \lf\{\xy{0}{0}\rg\}$ and define $\ell(\xval) = \exp(-|\xval|^2) - 1$.

The Hamiltonian corresponding to this game is again
$$H(\xval,p) := \max_{\uval \in \uvals} \min_{\dval \in \dvals} p_1 (\uval + \dval) + p_2 = p_2.$$
Letting $\ell(\xval) = \exp(-|\xval|^2) - 1$,
one can check that 
\begin{equation*}
    v(x,t) = \begin{cases}
        \ell(\xval_1,\xval_2 - \tval) & \xval_2 \le t,\\
        \ell(\xval_1,0) & t \le \xval_2 \le 0,\\
        \ell(\xval_1,\xval_2) & 0 \le \xval_2,
    \end{cases}
\end{equation*}
is continuously differentiable and satisfies the HJ equation 
\begin{equation*}
    \begin{cases}
        \partial_\tval \valuefunction(\xval,\tval) + \max\lf\{0, H(x, \nabla \valuefunction(\xval,\tval))\rg\} = 0 & \text{in } \Rx \tms \R_{<0}\\
        \valuefunction(\xval,0) = g(\xval) & \text{on } \Rx.
    \end{cases}
\end{equation*}
The incorrect level set theorem for target reaching (see e.g. Proposition 2 in \cite{Fisac-Sastry-Reach-Avoid-2015}) then asserts $\xy{0}{\tval} \in \mathcal{L}(\tval)$ for all $\tval < \tfin$ since $v\lf( \xy{0}{\tval},\tval \rg) = 0$.

To see this is not the case, fix some $\tval < 0$ and set $\xval = \xy{0}{\tval}$.
Consider the non-anticipative strategy $\dstrat \in \dstrats$ defined by
\begin{equation}\label{eqn:closed-target-counterexample-strategy}
    \dstrat[\usig](\tdum) = 
    \begin{cases} 
    0 & \tdum = \tval \\
    \limsup_{n \to \ii} \chi \lf( \int_t^{t+\frac{1}{n}} \usig(\tdumdum) \df \tdumdum \rg) & \tdum \in (\tval,\tfin],
    \end{cases}
\end{equation}
where $\chi(z) := +1$ for $z \ge 0$ and $\chi(z) := -1$ for $z < 0$.

Note that each disturbance signal $\dstrat[\usig]$ is indeed measurable because it is constant after the initial time.
It can also be checked directly from Definition \ref{definition:non-anticipative-strategy} that this disturbance strategy is indeed non-anticipative.
Intuitively, this is the case because given a control signal $\usig \in \usigs$, the value of the resulting disturbance signal $\dstrat[\usig]$ is $0$ at the initial time and for subsequent times only depends on how $\usig$ behaves infinitesimally close to the initial time. 

Now, fix some arbitrary control signal $\usig \in \usigs$.
For convenience, write $\trajSymbol_{\xval,\tval}^{\usig,\dstrat[\usig]}(\tdum)$ as $\xy{\xsig_1(\tdum)}{\xsig_2(\tdum)}$ for each $\tdum \in [\tval,\tfin]$.
To reach a contradiction, suppose there is some $\tdum^* \in [\tval, \tfin]$ for which $\xy{\xsig_1(\tdum^*)}{ \xsig_2(\tdum^*)} \in \mathcal{L}_0$.
Since $\xsig_2(\tdum^*) = \tval + \int_\tval^{\tdum^*} 1 \df \tdumdum = \tdum^*$, it follows that $\tdum^* = \tfin$.

There are two cases: either 1) for each $n \in \N$ there is some integer $k > n$ such that $\int_{\tval}^{\tval + \frac{1}{k}} \usig(\tdumdum) \df \tdumdum \ge 0$ or 2) $\int_{\tval}^{\tval + \frac{1}{k}} \usig(\tdumdum) \df \tdumdum < 0$ for all sufficiently large integers $k$.

Suppose the first case holds.
Then $\dstrat[\usig](\tdum) = +1$ for all $\tdum \in (\tval, \tfin]$.
Additionally, from the case assumption we can choose some $k \in \N$ for which
$\int_\tval^{\tval+\frac{1}{k}} \usig(\tdumdum) \df \tdumdum \ge 0$.
But then
\begin{align*}
    \xsig_1(\tdum^*) &= \int_\tval^{\tfin} \usig(\tdumdum) \df \tdumdum + \int_\tval^{\tfin} \dstrat[\usig](\tdumdum) \df \tdumdum\\
    &= \int_\tval^{\tval+\frac{1}{k}} \usig(\tdumdum) \df \tdumdum + \int_{\tval+\frac{1}{k}}^{\tfin} \usig(\tdumdum) \df \tdumdum + \int_\tval^{\tfin} 1 \df \tdumdum\\
    &\ge \int_{\tval+\frac{1}{k}}^{\tfin} \usig(\tdumdum) \df \tdumdum + \int_\tval^{\tfin} 1 \df \tdumdum > 0,
\end{align*}
where the final inequality follows from the facts that $\tdum + \frac{1}{k} > \tval$ and $\usig(\tdumdum) \ge -1$.
But then $\xy{\xsig_1(\tdum^*)}{ \xsig_2(\tdum^*)} \notin \mathcal{L}_0$, creating the desired contradiction.

In the second case, we have $\dstrat[\usig](\tdum) = -1$ for all $\tdum \in (\tval,\tfin]$, and we can choose some $k \in \N$ for which $\int_{\tval}^{\tval + \frac{1}{k}} \dstrat[\usig](\tdum) \df \tdum < 0$.
The remainder of the proof follows similarly.

Thus, we have concluded that $\xy{0}{\tval} \notin \BRS$ for all $\tval < 0$ even though $v\lf(\xy{0}{\tval}, \tval\rg) = 0$.
By contrast, it is easy to see that for each $\tval < 0$, $v\lf( \xy{0}{\tfin}, \tval \rg) = 0$ and indeed $\xy{0}{\tfin} \in \BRS$.
This counterexample demonstrates that for target-reaching games, interpreting the outcome of the game for points on the zero level set is not straightforward when the target set $\mathcal{L}_0$ is closed (in which case Assumption \ref{assumption:closure} cannot hold).
For some points on the zero level set, the controller player will win, whereas for others the disturbance player will win.

\section{Proof of Main Theorem}

\newcommand{\netindex}{\alpha}
\newcommand{\directedset}{I}
\newcommand{\trajsUnderU}{\mathbb{X}(\xval, \tval; \usig)}
\newcommand{\trajsAll}{\mathbb{X}(\xval, \tval)}
\newcommand{\dplan}{\theta}
\newcommand{\dplans}{\Theta(\xval, \tval)}
\newcommand{\productSpace}{\Pi_{\usig \in \usigs} \trajsUnderU}
\newcommand{\trajDStratStar}{\trajSymbol_{\xval,\tval}^{\usig, \dstrat^*[\usig]}}
\newcommand{\trajDStrataStar}{\trajSymbol_{\xval,\tval}^{\usig^*, \dstrat[\usig^*]}}

Throughout this section, as previously, we assume Assumptions \ref{assumption:continuity}-\ref{assumption:convexity} all hold.
We build to the proof of Theorem \ref{theorem:main-theorem}.
\begin{definition}[strict lexicographic order]
    The \textit{strict lexicographic order} on $\Rd$ is the strict total order $\ll$ for which given any two elements $a = (a_1,\dots,a_\ddimension)$ and $a' = (a_1',\dots,a_\ddimension')$ of $\Rd$, we have that $a \ll a'$ iff there is a $k \in \{1,\dots,\ddimension\}$ such that $a_i = a_i'$ for each $i \in \{1,\dots,k-1\}$ and $a_k < a_k'$.
\end{definition}

We will make use of the following version of Filippov's Lemma (see Lemma 2.4.2 in \cite{Friedman-Differential-Games}).
The lemma stated below is slightly different from the version in \cite{Friedman-Differential-Games}, but the proof proceeds almost identically.
For completeness, we reproduce the proof below, modifying it where relevant.
\begin{lemma}\label{filippov-lemma}[Filippov's Lemma]
    Suppose $g:[\tval,\tfin] \tms \uvals \tms \dvals \to \Rx$ is continuous.
    Let $\psi: [\tval,\tfin] \to \Rx$ and $\usig:[\tval,\tfin] \to \uvals$ be measurable functions which satisfy $\psi(\tdum) \in g(\tdum, \usig(\tdum), \dvals) := \{g(\tdum,\usig(\tdum),\dval) \mid \dval \in \dvals\}$ for almost every $\tdum \in [\tval,\tfin]$.
    Then there exists a measurable function $\dsig: [\tval,\tfin] \to \dvals$ such that $\psi(\tdum) = g(\tdum,\usig(\tdum),\dsig(\tdum))$ for almost every $\tdum \in [\tval,\tfin]$.
    Moreover, $\dsig$ can be chosen so that for a.e. $\tdum \in [\tval,\tfin]$, $\dsig(\tdum)$ is the minimum element of the set $\{\dval \in \dvals \mid \psi(\tdum) = g(\tdum, \usig(\tdum), \dval)\}$ under the lexicographic order on $\Rd$.
\end{lemma}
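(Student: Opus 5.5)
We construct $\dsig$ by a coordinate-by-coordinate lexicographic minimization, checking measurability at each stage through the semicontinuity of measurable selections over compact sets. Let $N \sbs [\tval,\tfin]$ be the null set where $\psi(\tdum) \notin g(\tdum,\usig(\tdum),\dvals)$, and set $\psi = g(\tdum,\usig(\tdum),\dval_0)$ there for a fixed $\dval_0 \in \dvals$ (this only changes $\psi$ on a null set). Define the set-valued map
\begin{equation*}
  D_0(\tdum) := \{\dval \in \dvals \mid \psi(\tdum) = g(\tdum,\usig(\tdum),\dval)\}.
\end{equation*}
Each $D_0(\tdum)$ is nonempty and compact, because $g$ is continuous and $\dvals$ is compact. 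For $k = 1,\dots,\ddimension$ we recursively define
\begin{align*}
  m_k(\tdum) &:= \min\{\dval_k \mid \dval \in D_{k-1}(\tdum)\},\\
  D_k(\tdum) &:= \{\dval \in D_{k-1}(\tdum) \mid \dval_k = m_k(\tdum)\}.
\end{align*}
Each $D_k(\tdum)$ is again nonempty and compact, and $D_\ddimension(\tdum)$ is a singleton. Its unique element is exactly the lexicographic minimum of $D_0(\tdum)$. We take $\dsig(\tdum)$ to be this element, so that $\psi(\tdum) = g(\tdum,\usig(\tdum),\dsig(\tdum))$ holds for a.e. $\tdum$ by construction. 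All that remains is to show that $\dsig$ is measurable.

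For measurability, the key step is to show by induction that each $m_k$ is a measurable function. From this it follows that each coordinate $\dsig_k = m_k$ is measurable. We first handle $D_0$ via Lusin's theorem, following Friedman's proof. Given $\epsilon > 0$, choose a compact $F \sbs [\tval,\tfin]$ with measure at least $\tfin - \tval - \epsilon$ on which $\psi$ and $\usig$ are continuous. On $F$, the graph $\{(\tdum,\dval) \in F \tms \dvals \mid \psi(\tdum) = g(\tdum,\usig(\tdum),\dval)\}$ is closed, because it is the zero set of a continuous function on a compact set. Hence $\tdum \mapsto D_0(\tdum)$ is upper semicontinuous on $F$ with compact values.

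We then claim inductively that $m_k$ is lower semicontinuous on $F$ and that the graph of $D_k$ is closed in $F \tms \dvals$. The argument runs as follows. If $\tdum_j \to \tdum$ in $F$ and $\dval^j \in D_{k-1}(\tdum_j)$ attains $m_k(\tdum_j)$, pass to a subsequence along which $\liminf m_k(\tdum_j)$ is realized and $\dval^j$ converges, say $\dval^j \to \dval$. Closedness of the graph of $D_{k-1}$ gives $\dval \in D_{k-1}(\tdum)$, so $m_k(\tdum) \le \dval_k = \liminf m_k(\tdum_j)$.

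Closedness of the graph of $D_k$ is the delicate part, and it is where I expect the main obstacle. The set $\{(\tdum,\dval) \mid \dval \in D_{k-1}(\tdum),\ \dval_k = m_k(\tdum)\}$ need not be closed, since $m_k$ is only lower semicontinuous. To get around this, I would instead use that a lower semicontinuous function on $F$ is Borel. Then the graph of $D_k$ is a Borel set of the form $\{\dval \in D_{k-1}(\tdum)\} \cap \{\dval_k - m_k(\tdum) = 0\}$, and I would run the induction in the class of Borel graphs with compact sections. The claim to carry through is that $m_k(\tdum) = \min$ over such a section is measurable, which follows from the measurable projection / measurable selection theorem. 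Alternatively, one can re-apply Lusin's theorem to make $m_k$ continuous on a slightly smaller compact $F_k \sbs F$; on $F_k$ the graph of $D_k$ is closed again, and the induction proceeds with measure losses of $\epsilon/\ddimension$ at each step.

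Finally, on $F_\ddimension$ every $m_k$ is continuous, so $\dsig = (m_1,\dots,m_\ddimension)$ is continuous there. Letting $\epsilon \to 0$ shows that $\dsig$ is measurable on a set of full measure, which proves Lemma~\ref{filippov-lemma}.
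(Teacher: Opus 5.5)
Your proposal is correct and follows essentially the paper's route: lexicographic minimization one coordinate at a time, Lusin's theorem, and a subsequence-plus-compactness argument showing that the $k$-th coordinate is lower semicontinuous on a large compact set, iterated by induction. The paper handles your ``delicate part'' the same way your second alternative does, by applying Lusin's theorem to $\psi$, $\usig$ and the already-measurable coordinates $\dsig_1,\dots,\dsig_{k-1}$ so that these are continuous on the Lusin set, which makes $\dsig_k$ lower semicontinuous there; the measurable-projection detour is therefore unnecessary.
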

\begin{proof}
Without loss of generality, we may assume that $\ps(\tdum) \in g(\tdum, \usig(\tdum), \dvals)$ for every $\tdum \in [\tval,\tfin]$.
We construct $\dsig:[\tval,\tfin] \to \dvals$ as follows.
Given some $\tdum \in [\tval,\tfin]$, let $\dvals_0(\tdum)$ be the set of all $\dval \in \dvals$ for which $\ps(\tdum) = g(\tdum,\usig(\tdum),\dval)$.
Because $g$ is continuous and $\dvals$ is compact, $\dvals_0(\tdum)$ is also compact.
It follows that the set $\dvals_1(\tdum)$ of points in $\dvals_0(\tdum)$ at which the function $h_1: \dvals_0(\tdum) \to \R, (\dval_1,\dots,\dval_{\ddimension}) \mapsto \dval_1$ attains its minimum value is non-empty and compact.
But then the set $\dvals_2(\tdum)$ of points in $\dvals_1(\tdum)$ at which the function $h_2: \dvals_1(\tdum) \to \R, (\dval_1,\dots,\dval_{\ddimension}) \mapsto \dval_2$ attains its minimum value is non-empty and compact as well.
We define $\dvals_3(\tdum),\dots,\dvals_{\ddimension}(\tdum)$ in this fashion and note that $\dvals_{\ddimension}(\tdum)$ must then be a singleton.
We let $\dsig(\tdum)$ be the element of this singleton.
Then in particular, $\dsig(\tdum)$ is the minimum element of $\dvals_0(\tdum)$ under the lexicographic order on $\Rd$.

It remains to show that $\dsig:[\tval,\tfin] \to \dvals$ is measurable, which we show by induction.
Write $\dsig(\cdot)$ as $(\dsig_1(\cdot),\dots,\dsig_{\ddimension}(\cdot))$.
Assume that $\dsig_1,\dots, \dsig_{k-1}$ are each measurable (the proof that $\dsig_1$ is measurable proceeds similarly). 
By Luzin's theorem (for reference, see example Exercise 44 in Section 2.4 of \cite{Folland-Real-Analysis}), for each $r \in \N$ we can choose a closed subset $E_r$ of $[\tval,\tfin]$ with measure no less than $T - t - \frac{1}{r}$ such that $\ps|E_r$, $\usig|E_r$, and $\dsig_1|E_r, \dots, \dsig_{k-1}|E_r$ are all continuous.

We claim that for each $c \in \R$ and $r \in \N$ the set $\{\tdum \in E_r \mid \dsig_k(\tdum) \le c\}$ is closed.
Suppose otherwise.
Then there is an $c^* \in \R$, a $r^* \in \N$, a $\tdum^* \in E$, and a sequence $\seq{\tdum_j}{j}$ in $E_{r^*}$ converging to $\tdum^*$ for which
\begin{equation}\label{proof:filippov-lemma-1}
    \dsig_k(\tdum_j) \le c^* < \dsig_k(\tdum^*)\textall j \in \N.
\end{equation}
But since $\dvals$ is compact, there is a subsequence $\seq{\tdum_{{j}_{j'}}}{j'}$ of $\seq{\tdum_j}{j}$ such that $\dsig(\tdum_{j_{j'}}) \to \dval^*$ for some $\dval^* = (\dval_1^*,\dots,\dval_k^*) \in \dvals$.
It follows from \eqref{proof:filippov-lemma-1} that
\begin{equation}\label{proof:filippov-lemma-3}
    \dval^*_k < \dsig_k(\tdum^*).
\end{equation}
Moreover, since $g(\tdum_{j_{j'}}, \usig(\tdum_{j_{j'}}), \dsig(\tdum_{j_{j'}})) = \ps(\tdum_{j_{j'}})$ for all $j' \in \N$,
then by continuity of $g$ on its domain along with continuity of $\ps$, $\usig$, and $\dsig_1,\dots,\dsig_{k-1}$ on $E_{r^*}$, we have
$$g(\tdum^*,\usig(\tdum^*),(\dsig_1(\tdum^*),\dots,\dsig_{k-1}(\tdum^*), \dval^*_k, \dots, \dval^*_\ddimension)) = \ps(\tdum^*),$$
so that $(\dsig_1(\tdum^*),\dots,\dsig_{k-1}(\tdum^*), \dval^*_k, \dots, \dval^*_\ddimension) \in \dvals_0(\tdum^*).$
It follows from the definition of $\dsig(\tdum^*)$ that $\dsig(\tdum^*) \ll (\dsig_1(\tdum^*),\dots,\dsig_{k-1}(\tdum^*), \dval^*_k, \dots, \dval^*_\ddimension)$, so that in particular $\dsig_k(\tdum^*) \le \dval_k^*$, which contradicts \eqref{proof:filippov-lemma-3}.

Thus $\{\tdum \in E_r \mid \dsig_k(\tdum) \le c\}$ is indeed closed for each $r \in \N$.
But then $\dsig_k|E_r$ is measurable for each $r \in \N$, so $\dsig_k|E$ is also measurable, where $E := \cup_{r \in \N} E_r$.
Since the measure of $E$ is then necessarily $T - t$, it follows that $\dsig_k$ is in fact measurable on $[\tval,\tfin]$.
By induction, we then have that $\dsig$ is measurable on $[\tval,\tfin]$, completing the proof.
\end{proof}

Before we prove Theorem \ref{theorem:main-theorem}, we will need to do some preliminary work.
For each $\xval \in \Rx$, we let $\trajsAll = \{ \traj \mid \usig \in \usigs, \dsig \in \dsigs\}$.
Conceptually, $\trajsAll$ represents the set of all trajectories on $[\tval,\tfin]$ the system may take when starting from state $\xval$.
We endow $\trajsAll$ with the uniform topology.
Recall that with this choice of topology, given a net $\net{\trajSymbol_\netindex}{\netindex}{\directedset}$ in $\trajsAll$ (where $\directedset$ is some directed set) and some $\trajSymbol \in \trajsAll$, then $\trajSymbol_\netindex \to \trajSymbol$ iff $\max_{t \in [\tval,\tfin]} \|\trajSymbol_\netindex(t) - \trajSymbol(t)\| \to 0$.

We now introduce a notion similar to a disturbance strategy and its non-anticipativity condition:
\begin{definition}\label{definition:plan}[Disturbance plan]
    Given $\xval \in \Rx$, we say a map $\dplan:\usigs \to \trajsAll$ is a \textit{disturbance plan} on $[\tval,\tfin]$ starting at $\xval$ if for each $\usig \in \usigs$ there is a $\dsig \in \dsigs$ such that
    $\dplan[\usig] = \traj.$
    Additionally, we say $\dplan$ is \textit{non-anticipative} if for each $\usig_1,\usig_2 \in \usigs$ and each $s \in [\tval,\tfin]$, we have $\dplan[\usig_1](\tdumdum) = \dplan[\usig_2](\tdumdum)$ for all $\tdumdum \in [\tval,\tdum]$ whenever $\usig_1(\tdumdum) = \usig_2(\tdumdum)$ for a.e. $\tdumdum \in [\tval,\tdum]$.
\end{definition}
We denote by $\dplans$ the set of all non-anticipative disturbance plans on $[\tval,\tfin]$ starting at $\xval$.
We endow each $\dplans$ with the topology of point-wise convergence.
Recall that with this choice of topology, given a net $\net{\dplan_\netindex}{\netindex}{\directedset}$ in $\dplans$ and some $\dplan \in \dplans$, then $\dplan_\netindex \to \dplan$ iff $\dplan_\netindex[\usig] \to\dplan[\usig]$ for each $\usig \in \usigs$.


\begin{remark}
Let's allow ourselves a brief aside to clarify how one should think about a disturbance plan.
Suppose $\dplan \in \dplans$.
Then for each $\usig \in \usigs$, $\dplan[\usig]$ represents a choice of a trajectory, starting from the state $\xval$ at time $\tval$, that the disturbance player will choose in response to the knowledge that the controller player has selected the signal $\usig$.
Of course, we require that the trajectory that is chosen is one that can actually be achieved via some disturbance signal $\dsig \in \dsigs$.
Conceptually, a disturbance plan is non-anticipative if the trajectory up to some time is entirely determined by the control signal only up to that time.
\end{remark}

It seems natural that there would be a correspondence between the non-anticipative disturbance strategies in $\dstrats$ and the non-anticipative disturbance plans in $\dplans$.
The nature of this correspondence is summarized in the following theorem.
\begin{lemma}[Correspondence between strategies and plans] \label{lemma:correspondence}
    Let $\tval < \tfin$ and $\xval \in \Rx$.
    \item[(i)] Suppose $\dstrat \in \dstrats$, and let $\dplan:\usigs \to \trajsAll$ be given by $\dplan[\usig] = \trajDStrat$.
    Then $\dplan \in \dplans$.
    \item[(ii)] For each $\dplan \in \dplans$, there is a $\dstrat \in \dstrats$ such that $\dplan[\usig] = \trajDStrat$ for all $\usig \in \usigs$.
\end{lemma}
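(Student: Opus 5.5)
For part (i), I would check the two requirements of Definition \ref{definition:plan} directly. Given $\dstrat \in \dstrats$, for each $\usig$ the signal $\dsig := \dstrat[\usig]$ lies in $\dsigs$, so $\dplan[\usig] = \trajDStrat$ has the required form and lies in $\trajsAll$. For non-anticipativity, fix $\usig_1,\usig_2$ and $\tdum$ with $\usig_1 = \usig_2$ a.e. on $[\tval,\tdum]$. Definition \ref{definition:non-anticipative-strategy} then gives $\dstrat[\usig_1] = \dstrat[\usig_2]$ a.e. on $[\tval,\tdum]$. Both trajectories are Carath\'{e}odory solutions on $[\tval,\tdum]$ that start at $\xval$ and have a.e.-equal right-hand side signals. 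By uniqueness of Carath\'{e}odory solutions under Assumptions \ref{assumption:continuity}--\ref{assumption:locally-lipschitz}, applied on the subinterval, they coincide on $[\tval,\tdum]$. One could also use a Gr\"onwall estimate with the local Lipschitz constant, since both trajectories are bounded.

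For part (ii), the plan is to recover a disturbance signal from each trajectory $\dplan[\usig]$ in a canonical way, so that non-anticipativity of $\dplan$ passes to the strategy. Fix $\usig$ and write $\xsig := \dplan[\usig]$. By definition there is some $\dsig$ with $\xsig = \traj$, so $\dot{\xsig}(\tdum) \in f(\xsig(\tdum),\usig(\tdum),\dvals)$ for a.e. $\tdum$. Let $g(\tdum,\uval,\dval) := f(\xsig(\tdum),\uval,\dval)$, which is continuous since $\xsig$ is continuous and $f$ is continuous, and let $\psi := \dot{\xsig}$, defined a.e. and measurable. Lemma \ref{filippov-lemma} then produces a measurable $\dsig$ with $\dot{\xsig}(\tdum) = f(\xsig(\tdum),\usig(\tdum),\dsig(\tdum))$ a.e. Moreover, for a.e. $\tdum$, $\dsig(\tdum)$ is the lexicographic minimum of $\{\dval \in \dvals \mid \dot{\xsig}(\tdum) = f(\xsig(\tdum),\usig(\tdum),\dval)\}$. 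Define $\dstrat[\usig] := \dsig$. By uniqueness of solutions, $\trajDStrat = \xsig = \dplan[\usig]$.

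The step I expect to need the most care is non-anticipativity of this $\dstrat$, and the lexicographic selection is what handles it. Suppose $\usig_1 = \usig_2$ a.e. on $[\tval,\tdum]$. Non-anticipativity of $\dplan$ gives $\xsig_1 = \xsig_2$ on $[\tval,\tdum]$, hence $\dot{\xsig}_1 = \dot{\xsig}_2$ a.e. on $[\tval,\tdum]$; at endpoints and single points this only affects null sets. So for a.e. $r \in [\tval,\tdum]$ the sets $\{\dval \mid \dot{\xsig}_i(r) = f(\xsig_i(r),\usig_i(r),\dval)\}$ coincide for $i=1,2$, and therefore so do their lexicographic minima. Off the exceptional null sets from Lemma \ref{filippov-lemma}, this gives $\dstrat[\usig_1](r) = \dstrat[\usig_2](r)$ for a.e. $r \in [\tval,\tdum]$, which is exactly Definition \ref{definition:non-anticipative-strategy}.

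A subtle point is that the selection must depend only on the pointwise data $(\xsig(r),\usig(r),\dot{\xsig}(r))$, not on any choice made globally. This is why I would use the canonical lexicographic version of Filippov's lemma rather than an arbitrary measurable selection. I would also note that the convexity in Assumption \ref{assumption:convexity} is not needed here, since membership of $\dot{\xsig}$ in the velocity set is guaranteed by the existence of some $\dsig$.
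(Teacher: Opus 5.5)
Your proposal is correct and follows the paper's proof essentially step for step: part (i) uses uniqueness of Carath\'{e}odory solutions on $[t,s]$, and part (ii) uses the lexicographic-minimum selection of Lemma~\ref{filippov-lemma}, so that a.e.\ agreement of $\mathbf{x}_{\mathbf{a}_i}$, $\dot{\mathbf{x}}_{\mathbf{a}_i}$ and $\mathbf{a}_i$ on $[t,s]$ forces a.e.\ agreement of the selected disturbance signals. Your extra details simply make explicit what the paper leaves implicit: applying the lemma with $g(s,a,b) := f(\mathbf{x}(s),a,b)$, checking $\mathbf{x}_{x,t}^{\mathbf{a},\gamma[\mathbf{a}]} = \theta[\mathbf{a}]$ by uniqueness, and noting that convexity is not used.
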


\begin{proof}(i) Let $\dstrat \in \dstrats$.
It is clear that $\dplan$ is a disturbance plan, so it suffices to show that it is also non-anticipative.
Let $\tdum \in [\tval,\tfin]$ and let $\usig_1, \usig_2 \in \usigs$ agree a.e. on $[\tval,\tdum]$.
Then $ \dsig_1 := \dstrat[\usig_1]$ and $\dsig_2 := \dstrat[\usig_2]$ also agree a.e. on $[\tval, \tdum]$.
Letting $\xsig_1 = \trajSymbol_{\tval,\xval}^{\usig_1,\dsig_1}$ and $\xsig_2 = \trajSymbol_{\tval,\xval}^{\usig_2,\dsig_2}$, observe that
$$\dot{\xsig}_1(\tdumdum)
= f\lf(\xsig_1(\tdumdum), \usig_1(\tdumdum), \dsig_1(\tdumdum) \rg) 
= f\lf(\xsig_1(\tdumdum), \usig_2(\tdumdum), \dsig_2(\tdumdum) \rg)$$
for a.e. $\tdumdum \in [\tval,\tdum]$. Moreover, by definition, $\dot{\xsig}_2(\tdumdum) = f\lf(\xsig_2(\tdumdum), \usig_2(\tdumdum), \dsig_2(\tdumdum) \rg)$ for a.e. $\tdumdum \in [\tval,\tdum]$.
Since $\xsig_1(\tval) = \xval = \xsig_2(\tval)$, it follows from uniqueness of Carath\'{e}odory solutions to \eqref{eqn:dynamics} under Assumptions \ref{assumption:continuity}-\ref{assumption:locally-lipschitz} that $\xsig_1$ and $\xsig_2$ agree on $[\tval,\tdum]$.

(ii) Let $\dplan \in \dplans$.
Define the disturbance strategy $\dstrat: \usigs \to \dsigs$ as follows.
For each $\usig \in \usigs$, let $\xsig_\usig = \dplan[\usig].$
Since each $\xsig_\usig$ is a trajectory in $\trajsUnderU$, we have that
$\dot{\xsig}_\usig(\tdum) \in f\lf(\xsig_\usig(\tdum), \usig(\tdum), \dvals \rg)$ for a.e. $\tdum \in [\tval,\tfin]$.
It follows from Lemma \ref{filippov-lemma} that for each $\usig \in \usigs$, we can choose $\dsig_\usig \in \dsigs$ such that $\dsig_\usig(\tdum)$ is the minimum element in the strict lexicographic order on $\Rd$ of the set $\lf\{\dval \in \dvals ~\mid \dot{\xsig}_\usig(\tdum) = f\lf(\xsig_\usig(\tdum), \usig(\tdum), \dval \rg) \rg\}$ for a.e. $\tdum \in [\tval,\tfin]$.
We set $\dstrat[\usig] = \dsig_\usig$ for each $\usig \in \usigs$.

It suffices to show that $\dstrat$ is non-anticipative.
Let $\tdum \in [\tval,\tfin]$, and let $\usig_1, \usig_2 \in \usigs$ agree a.e. on $[\tval,\tdum]$.
Then $\xsig_{\usig_1}$ and $\xsig_{\usig_2}$ agree on $[\tval,\tdum]$.
Thus for a.e. $\tdumdum \in [\tval,\tdum]$
\begin{align*}
&\lf\{\dval \in \dvals \mid \dot{\xsig}_{\usig_1} (\tdumdum) = f\lf(\xsig_{\usig_1}(\tdumdum), \usig_1(\tdumdum), \dval \rg) \rg\} \\
&\quad= \lf\{\dval \in \dvals \mid \dot{\xsig}_{\usig_2}(\tdumdum) = f\lf(\xsig_{\usig_2}(\tdumdum), \usig_2(\tdumdum), \dval \rg) \rg\}.
\end{align*}
But for a.e. $\tdumdum \in [\tval,\tdum]$, $\dsig_{\usig_1}(\tdumdum)$ and $\dsig_{\usig_2}(\tdumdum)$ are the minimum elements (in the strict lexicographic ordering on $\Ru$) of the set on the left and right of the above equation, respectively.
Thus $\dstrat[\usig_1](\tdumdum) = \dsig_{\usig_1}(\tdumdum) = \dsig_{\usig_2}(\tdumdum) = \dstrat[\usig_2](\tdumdum)$ for a.e. $\tdumdum \in [\tval,\tdum]$.
\end{proof}

In the proof of the upcoming lemma we will make use of the following result, which is immediate from Theorem 2.4.2 in \cite{Friedman-Differential-Games} under Assumption \ref{assumption:convexity}.
Thereafter we mostly follow similarly \cite{Varaiya-Existence-SIAM-1967}, modifying the analysis where needed.
\begin{theorem}[Compactness of subsets of trajectories]~ \label{theorem:compactness-trajectories}
    For each $\xval \in \Rx$ and $\uval \in \uvals$, the set $\{ \traj \mid \dsig \in \dsigs \}$ is a compact subset of $\trajsAll$.
\end{theorem}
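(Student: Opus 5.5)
We read the statement as concerning a fixed control signal $\usig \in \usigs$, so the set in question is $\{\traj \mid \dsig \in \dsigs\}$. Since $\trajsAll$ carries the uniform (metric) topology, it suffices to prove sequential compactness. The plan has two parts: Arzel\`a--Ascoli gives relative compactness, and Assumption \ref{assumption:convexity} together with Lemma \ref{filippov-lemma} gives closedness.

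\emph{Step 1 (equiboundedness and equicontinuity).} For any $\dsig \in \dsigs$, Assumption \ref{assumption:affine-bounded} and Gronwall's inequality yield a bound on $|\traj(\tdum)|$ for $\tdum \in [\tval,\tfin]$:
\[
|\traj(\tdum)| \le (|\xval| + K|\tval|)e^{K|\tval|} =: R .
\]
This bound does not depend on $\dsig$. It follows that $|\dot{\trajSymbol}| \le K(R+1)$ almost everywhere, so the family is uniformly bounded and uniformly Lipschitz. By Arzel\`a--Ascoli, any sequence $\xsig_n = \trajSymbol_{\xval,\tval}^{\usig,\dsig_n}$ has a subsequence converging uniformly to some continuous $\xsig$. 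That limit is $K(R+1)$-Lipschitz and satisfies $\xsig(\tval)=\xval$.

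\emph{Step 2 (identify the derivative of the limit).} The derivatives $\dot{\xsig}_n$ are bounded in $L^\infty([\tval,\tfin];\Rx)$. Passing to a further subsequence, they converge weakly in $L^2$ to some $\psi$. Testing against indicator functions of $[\tval,\tdum]$ gives $\xsig(\tdum) = \xval + \int_\tval^\tdum \psi$, so $\xsig$ is absolutely continuous with $\dot{\xsig}=\psi$ a.e.

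Next set $F(\tdum) := f(\xsig(\tdum),\usig(\tdum),\dvals)$. This set is compact by Assumption \ref{assumption:continuity} and convex by Assumption \ref{assumption:convexity}. Assumption \ref{assumption:locally-lipschitz}, applied with radius $R+1$, gives
\[
\operatorname{dist}\big(\dot{\xsig}_n(\tdum), F(\tdum)\big) \le L_{R+1}\,\|\xsig_n - \xsig\|_\ii =: \epsilon_n \to 0 .
\]
By Mazur's lemma, suitable convex combinations of $\{\dot{\xsig}_m\}_{m\ge n}$ converge strongly in $L^2$ to $\psi$, hence a.e. along a subsequence. The $\epsilon_n$-neighbourhood of the convex set $F(\tdum)$ is convex, so these convex combinations stay within $\epsilon_n$ of $F(\tdum)$. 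Closedness of $F(\tdum)$ then gives $\psi(\tdum) \in F(\tdum)$ for a.e. $\tdum$. This step is where convexity is indispensable, and I expect it to be the main obstacle; without convexity a weak limit of the velocities can leave the velocity set.

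\emph{Step 3 (recover a disturbance signal).} Define $g(\tdum,\uval,\dval) := f(\xsig(\tdum),\uval,\dval)$. It is continuous on $[\tval,\tfin]\tms\uvals\tms\dvals$ because $\xsig$ and $f$ are continuous. Lemma \ref{filippov-lemma} then provides a measurable $\dsig:[\tval,\tfin]\to\dvals$ with
\[
\dot{\xsig}(\tdum) = f(\xsig(\tdum),\usig(\tdum),\dsig(\tdum)) \quad \text{for a.e. } \tdum .
\]
Thus $\xsig$ is a Carath\'eodory solution with $\xsig(\tval)=\xval$. By uniqueness, $\xsig = \trajSymbol_{\xval,\tval}^{\usig,\dsig}$, which lies in the set. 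Hence every sequence in the set has a subsequence converging within the set, and the set is compact in $\trajsAll$.
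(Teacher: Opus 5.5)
Your argument is correct. It differs from the paper mainly in being self-contained: the paper gives no proof of this statement and instead invokes Theorem~2.4.2 of \cite{Friedman-Differential-Games}, with Assumption~\ref{assumption:convexity} supplying the convexity hypothesis. What you wrote is in effect a reconstruction of the classical Filippov-type compactness argument behind that citation:
\begin{itemize}
\item Arzel\`a--Ascoli gives a uniformly convergent subsequence;
\item the velocities have a weak-$L^2$ limit;
\item Mazur's lemma, together with convexity of $f(\xsig(s),\usig(s),\dvals)$, shows the limiting velocity is still admissible;
\item the paper's own Lemma~\ref{filippov-lemma}, followed by uniqueness, recovers a measurable disturbance signal.
\end{itemize}
The citation buys brevity. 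Your route buys transparency in three ways. It isolates the single step where Assumption~\ref{assumption:convexity} enters. It shows why the paper's version of Filippov's lemma is stated with a measurable argument $\usig(s)$. It also spares you from checking that the cited theorem's hypotheses cover the merely measurable time dependence $s \mapsto f(\cdot,\usig(s),\cdot)$ created by fixing the control signal. You also correctly read the statement's ``$\uval \in \uvals$'' as a fixed signal $\usig \in \usigs$.

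One small repair is needed. The quantity $\epsilon_n = L_{R+1}\|\xsig_n - \xsig\|_\infty$ need not be nonincreasing in $n$. So a convex combination of $\{\dot{\xsig}_m\}_{m \ge n}$ is only guaranteed to lie within $\delta_n := \sup_{m \ge n} \epsilon_m$ of $F(s)$, not within $\epsilon_n$. Since $\delta_n \to 0$ as well, the conclusion $\psi(s) \in F(s)$ for a.e.\ $s$ is unaffected.
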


\begin{lemma}[Compactness of $\dplans$]\label{lemma:compactness-plans}
    The set $\dplans$ is compact for each $\xval \in \Rx$.
\end{lemma}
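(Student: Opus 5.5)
The plan is to use Tychonoff's theorem and then show that the non-anticipativity and feasibility constraints cut out a closed subset. For each $\usig \in \usigs$, let $\trajsUnderU := \{\traj \mid \dsig \in \dsigs\}$. By Theorem \ref{theorem:compactness-trajectories}, which relies on Assumption \ref{assumption:convexity}, each $\trajsUnderU$ is a compact subset of $\trajsAll$. (The theorem is stated with a fixed $\uval$, but we read it with a control signal $\usig$, which is what Friedman's result covers.) By Tychonoff's theorem, the product $\productSpace$ is compact in the product topology. The product topology is exactly the topology of pointwise convergence we placed on maps $\usigs \to \trajsAll$.

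A map $\dplan:\usigs \to \trajsAll$ is a disturbance plan starting at $\xval$ precisely when $\dplan[\usig] \in \trajsUnderU$ for every $\usig$, that is, when $\dplan$ is an element of $\productSpace$. Hence $\dplans$ is the subset of this product consisting of the non-anticipative elements. It therefore suffices to show that this subset is closed in the product. A closed subset of a compact space is compact, and the subspace topology on $\dplans$ agrees with the pointwise topology we chose.

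For closedness, take a net $\net{\dplan_\netindex}{\netindex}{\directedset}$ in $\dplans$ that converges pointwise to some $\dplan \in \productSpace$. Fix $\tdum \in [\tval,\tfin]$, and fix $\usig_1,\usig_2$ that agree a.e. on $[\tval,\tdum]$. For every $\netindex$ and every $\tdumdum \in [\tval,\tdum]$, non-anticipativity gives $\dplan_\netindex[\usig_1](\tdumdum) = \dplan_\netindex[\usig_2](\tdumdum)$. Uniform convergence implies pointwise convergence in time, so $\dplan_\netindex[\usig_i](\tdumdum) \to \dplan[\usig_i](\tdumdum)$ for $i=1,2$. Limits of nets in $\Rx$ are unique, so $\dplan[\usig_1](\tdumdum) = \dplan[\usig_2](\tdumdum)$. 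Thus $\dplan$ is non-anticipative and lies in $\dplans$.

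The only substantive step is the compactness of each factor $\trajsUnderU$, and that is supplied by Theorem \ref{theorem:compactness-trajectories}. This is exactly where convexity enters: without it, limits of trajectories might only be relaxed trajectories that no disturbance signal generates. The remaining care is to keep everything in the language of nets, since $\usigs$ is uncountable and the product is not metrizable, and to note that feasibility is already built into the factors, so no separate closedness argument is needed for it.
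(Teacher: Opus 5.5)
Your proof is correct and follows the paper's argument essentially step for step. You identify disturbance plans with elements of the product $\prod_{\mathbf{a}} \mathbb{X}(x,t;\mathbf{a})$, get compactness of that product from Tychonoff's theorem together with Theorem~\ref{theorem:compactness-trajectories}, and show the non-anticipative plans form a closed subset by passing $\theta_\alpha[\mathbf{a}_1] = \theta_\alpha[\mathbf{a}_2]$ on $[t,s]$ to the limit along the net. Your reading of Theorem~\ref{theorem:compactness-trajectories} with a control signal $\mathbf{a}$ in place of the constant $a$ is the right one, since that is how the paper itself uses it.
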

\begin{proof}
    For each $\usig \in \usigs$, let $\trajsUnderU = \{ \traj \mid \dsig \in \dsigs\}$.
    Then the set of all disturbance plans on $[\tval,\tfin]$ starting at $\xval$ can be written as $\productSpace$ (i.e. the infinite Cartesian product of all the $\trajsUnderU$).
    We endow $\productSpace$ with the topology of point-wise convergence (i.e. the product topology).
    By Tychonoff's Theorem and Theorem \ref{theorem:compactness-trajectories}, $\productSpace$ is compact, so it is sufficient to show that $\dplans$ is closed in $\productSpace$.
    Let $\net{\dplan_\netindex}{\netindex}{\directedset}$ (where $\directedset$ is some directed set) be a net in $\dplans$ which converges in $\productSpace$ to some $\dplan \in \productSpace$.
    We wish to show $\dplan \in \dplans$, i.e. that $\dplan$ is non-anticipative.

    So let $\tdum \in [\tval,\tfin]$ and suppose $\usig_1, \usig_2 \in \usigs$ agree a.e. on $[\tval,\tdum]$.
    For convenience, for each $\netindex \in \directedset$, let $\xsig_{\netindex,1} = \xsig_\netindex[\usig_1]$ and $\xsig_{\netindex,2} = \dplan_\netindex[\usig_2]$.
    Also, let $\xsig_{1} = \dplan[\usig_1]$ and $\xsig_{2} = \dplan[\usig_2]$.
    Then $\xsig_{\netindex,1} = \xsig_{\netindex,2}$ on $[\tval,\tdum]$ for each value of $\netindex$.
    Since $\xsig_{\netindex,1} \to \xsig_1$ and $\xsig_{\netindex,2} \to \xsig_2$, then
    $\xsig_1 = \xsig_2$ on $[\tval,\tdum]$.
    Thus, $\dplan$ is non-anticipative, so $\dplans$ is closed and thus compact.
\end{proof}

\begin{proof}[Proof of Theorem \ref{theorem:main-theorem}]
We prove \eqref{eqn:main-theorem-bas}, and then \eqref{eqn:main-theorem-brs} follows from taking the complement of both sides of \eqref{eqn:main-theorem-bas}.

Fix $\xval \in \Rx$.
First, suppose that $\valuefunction(\xval,\tval) > 0$.
Arbitrarily choose $\dstrat \in \dstrats$. 
By definition of $\valuefunction$, we have $\inf_{\usig \in \usigs} \payoff\lf[\trajDStrat\rg] > 0$, so that $\payoff\lf[\trajDStrataStar\rg] > 0$ for some $\usig^* \in \usigs$.
But then $\trajDStrataStar \notin \failureSet$ by the theorem hypothesis.
Since $\dstrat$ was arbitrary, it follows that $\xval \notin \BAS$.

Instead suppose $\valuefunction(\xval,\tval) \le 0$.
For each $\usig \in \usigs$, let $h_\usig: \dplans \to \R, \dplan \mapsto \payoff[\dplan[\usig]].$
We claim that each $h_\usig$ is continuous.
Indeed, given some $\usig \in \usigs$ and some $\dplan \in \dplans$, if $\net{\dplan_\netindex}{\netindex}{\directedset}$ is a net is $\dplans$ which converges to $\dplan$, then $\dplan_\netindex[\usig]$ converges to $\dplan[\usig]$, so $h_\usig(\dplan_\netindex) = \payoff[\dplan_\netindex[\usig]] \to \payoff[\dplan[\usig]] = h_\usig(\dplan)$.

Now let $H:\dplans \to \R, \dplan \mapsto \sup_{\usig \in \usigs} h_\usig(\dplan)$.
Then $H$ is lower semi-continuous, and since its domain is compact by Lemma \ref{lemma:compactness-plans}, $H$ achieves its minimum.
In other words, there is some $\dplan^* \in \dplans$ such that
\begin{align}\label{proof:main-theorem-1}
    \sup_{\usig \in \usigs}\payoff\lf[\dplan^*[\usig]\rg] = H[\dplan^*] &= \inf_{\dplan \in \dplans} H[\dplan]\nonumber \\
    &= \inf_{\dplan \in \dplans} \sup_{\usig \in \usigs}\payoff\lf[\dplan[\usig]\rg].  
\end{align}
But by Lemma \ref{lemma:correspondence}(i), there is some $\dstrat^* \in \dstrats$ such that
\begin{equation}\label{proof:main-theorem-2}
    \sup_{\usig \in \usigs}\payoff\lf[\trajDStratStar\rg] = \sup_{\usig \in \usigs}\payoff\lf[\dplan^*[\usig]\rg].
\end{equation}
Moreover, by Lemma \ref{lemma:correspondence},
\begin{equation}\label{proof:main-theorem-3}
    \inf_{\dplan \in \dplans} \sup_{\usig \in \usigs}\payoff\lf[\dplan[\usig]\rg] = \inf_{\dstrat \in \dstrats} \sup_{\usig \in \usigs}\payoff\lf[\trajDStrat\rg].
\end{equation}
Combining \eqref{proof:main-theorem-1}-\eqref{proof:main-theorem-3} gives 
$$\sup_{\usig \in \usigs}\payoff\lf[\trajDStratStar\rg] = \inf_{\dstrat \in \dstrats} \sup_{\usig \in \usigs}\payoff\lf[\trajDStrat\rg] = \valuefunction(\xval, \tval) \le  0.$$
Thus $\trajDStratStar \in \failureSet$ for all $\usig \in \usigs$ by the theorem hypothesis.
But then $\xval \in \BAS$.
\end{proof}

\bibliographystyle{ieeetr}
\bibliography{references}

\end{document}